\documentclass[11pt]{article}

\usepackage[T1]{fontenc}
\usepackage[utf8]{inputenc}
\usepackage{lmodern}
\usepackage[margin=1in]{geometry}
\usepackage{amsmath,amssymb,amsthm}
\usepackage{microtype}
\usepackage[
  backend=biber,
  style=numeric,
  natbib=true,
  maxalphanames=3,
  minalphanames=3,
  maxbibnames=99
]{biblatex}

\usepackage[hidelinks]{hyperref}

\newtheorem{theorem}{Theorem}
\newtheorem{lemma}{Lemma}
\newcommand{\E}{\mathbb{E}}

\newcommand{\OPT}{\operatorname{OPT}}
\newcommand{\ON}{\operatorname{ON}}
\newcommand{\ALG}{\operatorname{ALG}}
\newcommand{\Gap}{\operatorname{Gap}}

\usepackage{authblk}
\title{A Nearly Tight Lower Bound for Matroid Intersection \\ 
Prophet Inequalities}

\author[1,2]{Dimitris Fotakis}
\author[3]{Charalampos Platanos}
\author[1,2]{Thanos Tolias}

\affil[1]{\small{National Technical University of Athens, Greece}}
\affil[2]{\small{Archimedes RU, Athena RC, Greece}}
\affil[3]{\small{Massachusetts Institute of Technology, USA}}

\affil[ ]{\small{\texttt{{ \href{mailto:fotakis@cs.ntua.gr}{fotakis@cs.ntua.gr} \quad
\href{mailto:platanos@csail.mit.edu}{platanos@csail.mit.edu} \quad
\href{mailto:thanostolias@mail.ntua.gr}{thanostolias@mail.ntua.gr}}}}}
\date{}
\begin{document}

\maketitle

\renewcommand\thefootnote{}\footnotetext{This work has been partially supported by project MIS 5154714 of the National Recovery and
Resilience Plan Greece 2.0 funded by the European Union under the NextGenerationEU Program.

Charalampos Platanos is supported by the Paris Kanellakis Fellowship.}


\begin{abstract}
    We study prophet inequalities under intersections of $q$ partition matroids,
where an online algorithm irrevocably selects elements with independent
nonnegative values drawn from known distributions and revealed in an
adversarial order. We prove an $\Omega(q/\log q)$ lower bound on the
competitive ratio.
Together with the known $O(q)$ upper bounds, this resolves, up to a
logarithmic factor, the optimal dependence on $q$, an open question
posed by \citet*{CCFPW22} and \citet*{SVW23}. Our construction also yields
an $\Omega(d/\log d)$ lower bound for $d$-single-minded auctions, where
buyers request fixed bundles of at most $d$ unit-capacity items. Our construction and analysis build on
the \emph{big-decisions-first} framework of \citet*{RS} (STOC 2026).
\end{abstract}

\section{Introduction}
\label{sec:introduction}

Prophet inequalities quantify the cost of making irrevocable decisions under
uncertainty. An online algorithm observes independent nonnegative values drawn
from known distributions and revealed in a fixed adversarial order. It must
immediately accept or reject each value while maintaining feasibility, and
competes with a prophet who knows all realizations in advance. The classical single-choice problem admits an optimal competitive factor
of~$2$, due to \citet{KS77}. \citet{KW12} extended this guarantee to
arbitrary matroid constraints.
Many allocation problems, however, require a decision to satisfy several
constraints simultaneously. Intersections of partition matroids already
model this situation: each partition imposes capacity constraints on its
blocks, and every accepted element consumes capacity in one block of each
partition.

For intersections of $q$ matroids, \citet*{KW12} give an $O(q)$ guarantee,
while \citet*{FSZ16} obtain $e(q+1)$. For partition matroids with capacity
one per block, \citet*{CCFPW22} improve this guarantee to $q+1$. On the
lower-bound side, the $\Omega(\sqrt q)$ bound of \citet*{KW12} was improved
to $q^{1/2+\Omega(1/\log\log q)}$ by \citet*{SVW23}, who use product
dimension to represent the same feasible family with fewer partition
matroids. Both lower bounds hold even in the more restricted setting of
unit-capacity partition matroids with i.i.d.\ Bernoulli values.

\citet*{CCFPW22} explicitly ask for the tight competitive ratio for
intersections of partition matroids; \citet*{SVW23} again highlight this
question and ask whether asymptotically better algorithms exist. We resolve
the dependence on $q$ up to a logarithmic factor by proving an
$\Omega(q/\log q)$ lower bound.

For an instance $I$, write $\operatorname{Gap}(I)$ for the ratio of the expected
offline optimum to the largest expected value achievable by any online
algorithm, allowing randomization.

\begin{theorem}[Partition-matroid intersection]
\label{thm:partition-lower-bound}
For every integer $q\ge 2$, there is a finite prophet inequality instance $I_q$
whose feasible sets are the common independent sets of $q$ partition matroids,
each with capacity one per block, such that
\[
    \operatorname{Gap}(I_q)
    \ge \frac{q}{4\log\!\bigl(1+(e-1)q\bigr)}.
\]
\end{theorem}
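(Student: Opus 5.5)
We will build $I_q$ as a layered "tree of bundles" instance in the spirit of the big-decisions-first framework of \citet*{RS}. There are $L\approx \log(1+(e-1)q)$ stages, processed in adversarial order from the most valuable to the least valuable. Stage $\ell$ contains many elements. Each element has an independent Bernoulli-type value: it equals $v_\ell$ with probability $p_\ell$ and $0$ otherwise, and the values are geometrically scaled so that every stage contributes the same expected amount to the prophet. Each element is a bundle that uses exactly one block in each of the $q$ partitions. We design the block structure so that:
\begin{itemize}
  \item accepting an element of stage $\ell$ blocks a large family of later elements, roughly $q$ times more expected later value than it is worth;
  \item the prophet, who sees all realizations, can always find a disjoint (common independent) set that collects essentially the full expected value of every stage.
\end{itemize}
Concretely, we split the $q$ partitions into groups and let elements of stage $\ell$ collide with the elements of stage $\ell+1,\dots,L$ hanging below them, as in a $q$-ary conflict tree. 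Capacity one per block then encodes "at most one element per conflict clique".

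The proof has three steps.

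First, we lower-bound $\E[\OPT]$. We exhibit an explicit offline strategy: in each stage, take all realized nonzero elements whose ancestors in the conflict structure were not realized. Because the $p_\ell$ are small, with constant probability an element has no conflicting realized ancestor. This gives $\E[\OPT]\ge c\cdot \sum_\ell (\text{stage mass}) = c L\cdot q\cdot(\text{unit})$, after normalizing so that each stage has mass about $q$.

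Second, we upper-bound the value of any online algorithm, allowing randomization. We use the big-decisions-first argument: process stages in order, and charge the value of every element the algorithm accepts against the later value it destroys. The key inequality we aim for is a potential (LP) bound. Let $x_\ell$ be the probability mass accepted in stage $\ell$. Feasibility together with the blocking structure forces $\sum_\ell x_\ell\cdot(\text{weights}) \le$ something like $\log(1+(e-1)q)$ in the normalized units. The total online value is then $O(L)$ while the prophet gets $\Omega(q\cdot L)$, or, after choosing parameters, online gets $O(\log q)$ against the prophet's $\Omega(q)$. We would formalize this with a dual-fitting or Lagrangian argument: define a potential $\Phi_\ell$ equal to the expected residual value available to the algorithm from stage $\ell$ onward, and show by backward induction that $\Phi_\ell\le \log(1+(e-1)\,\cdot\,\text{remaining})$. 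The $(e-1)$ constant should come from optimizing a recursion of the form $\Phi'=\max_x\{x + (1-x)\Phi\ \text{with blocking}\}$, whose continuous solution is logarithmic.

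Third, we combine the two bounds and choose the parameters ($L$, $p_\ell$, $v_\ell$) so that the ratio is at least $q/(4\log(1+(e-1)q))$.

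We expect the main obstacle to be the online upper bound, namely the backward induction that handles arbitrary adaptive randomized policies. The algorithm can condition on which earlier elements realized and can leave blocks open strategically, so we must show that correlations across partitions cannot help. We would first attempt to reduce to a single "representative" path in the conflict tree via symmetry and averaging, which turns the adaptive problem into a one-dimensional recursion. We would then solve that recursion exactly to extract the $\log(1+(e-1)q)$ term.
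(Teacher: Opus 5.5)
\textbf{Direction of the construction.} This is the decisive problem. You present the stages from the most valuable to the least valuable. Each stage-$\ell$ element conflicts with the later elements hanging below it, and you lower-bound $\E[\OPT]$ by the rule ``take every realized element none of whose ancestors was realized.'' In your ordering, ancestors arrive before their descendants, so this rule is itself an online policy. Your first step would therefore lower-bound $\ON(I_q)$ by the same quantity, and since $\E[\OPT]$ is at most the total expected mass, the gap would be $O(1)$. Big-decisions-first needs the opposite asymmetry: the offline solution must gain by prioritizing the elements that arrive \emph{last}. In the paper (with $d=q$), layer $\ell$ has value $1/r_\ell$ with $r_\ell=8^\ell$, and a layer-$\ell$ bundle takes one item from each pool $M_\ell,\dots,M_d$. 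Layers arrive in the order $d,d-1,\dots,1$. The cheap layer-$d$ buyers come first, and any reward $Y_d$ taken from them occupies at least a $Y_d$ fraction of $M_d$, which every later bundle needs. Meanwhile the offline greedy processes layers $1,2,\dots,d$, and since $\sum_{i<\ell}r_i\le r_\ell/7$, its high-value selections use only a small fraction of each deeper pool.

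Your parameters are also off. There should be $q$ layers (one pool, hence one partition matroid, per layer), each worth $\Theta(1)$ to the prophet; $\log(1+(e-1)q)$ is the online value, not the number of stages. Your two normalizations (prophet $\Omega(qL)$ against online $O(L)$, versus $\Omega(q)$ against $O(\log q)$) contradict each other, and the first would even give $\Omega(q)$, which is stronger than the theorem.

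Finally, a conflict tree in which an element blocks its whole subtree cannot be encoded by $q$ capacity-one partition matroids. Each element lies in only $q$ blocks, and each block is a clique of the conflict graph, so blocked elements must also block one another. Consequently your prophet rule, which may take several realized elements from one block, need not be feasible.

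\textbf{The online bound.} Even with the order corrected, the online bound, which you rightly call the crux, lacks its two real ingredients. First, blocking must be multiplicative and insensitive to \emph{which} items the algorithm has occupied, so that adaptivity cannot help. The paper obtains this from a product structure rather than a tree: every tuple in $M_\ell\times\cdots\times M_d$ is a layer-$\ell$ bundle with the same multiplicity $m_\ell=\prod_{j\le\ell}r_j$, and every activation probability is $p=1/P$ with $P=\prod_j r_j$. Each accepted active layer-$j$ buyer earns $1/r_j$ and uses its own anchor in $M_j$, so $|U_j|\ge r_jY_j$. Counting the free bundles then gives, for every history of any adaptive randomized algorithm, $\E[Y_\ell\mid\mathcal H_\ell]\le\prod_{j>\ell}(1-|U_j|/r_j)\le\prod_{j>\ell}(1-Y_j)\le e^{-Q_\ell}$ with $Q_\ell=\sum_{j>\ell}Y_j$. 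No reduction to a representative path is needed, and such symmetry arguments are fragile precisely because adaptive policies break symmetry.

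Second, the potential is the forward quantity $e^{Q_\ell}$, not a value-to-go. The chord bound $e^y\le1+(e-1)y$ on $[0,1]$ gives $\E[e^{Q_{\ell-1}}\mid\mathcal H_\ell]\le e^{Q_\ell}\bigl(1+(e-1)\E[Y_\ell\mid\mathcal H_\ell]\bigr)\le e^{Q_\ell}+e-1$. Hence $\E[e^{\ALG}]\le1+(e-1)d$, and Jensen yields $\E[\ALG]\le\log(1+(e-1)d)$. This is where $e-1$ comes from; your recursion $\max_{x\in[0,1]}\{x+(1-x)\Phi\}$ equals $\max\{1,\Phi\}$ and produces no logarithm.

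On the offline side, the paper's greedy selects at most one active buyer per free anchor. The geometric pool sizes leave at least $3P/4$ compatible buyers at each such anchor, giving $\E[\OPT]\ge9d/28>d/4$.
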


The same $\Omega(q/\log q)$ lower bound therefore applies to intersections
of arbitrary matroids. Our instance uses independent, nonidentically
distributed values; the corresponding question for i.i.d.\ Bernoulli
values remains open.

Intersections of unit-capacity partition matroids admit a natural interpretation
as \emph{single-minded auctions}. A seller has one copy of each item, and each
buyer requests a fixed, known bundle, with an independent nonnegative value
for receiving the entire bundle and zero value otherwise. Introducing one
item for each block of each partition turns an element into a buyer requesting
the $q$ items corresponding to its blocks. Conversely, when items are divided
into $d$ pools and every bundle uses at most one item from each pool,
feasibility is the intersection of $d$ unit-capacity partition matroids.
These translations preserve values, arrival order, and the online and offline
problems. We establish the following equivalent formulation
of Theorem~\ref{thm:partition-lower-bound}.

\begin{theorem}[Single-minded allocation]
\label{thm:single-minded-lower-bound}
For every integer $d\ge 2$, there is a finite single-minded instance $I_d$
with one copy of each item and demand bundles of size at most $d$ such that
\[
    \operatorname{Gap}(I_d)
    \ge \frac{d}{4\log\!\bigl(1+(e-1)d\bigr)}.
\]
\end{theorem}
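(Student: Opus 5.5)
Theorem~\ref{thm:single-minded-lower-bound} follows from Theorem~\ref{thm:partition-lower-bound} through the translation described just before the statement. Given $d\ge 2$, we set $q=d$ and take the instance $I_q$ from Theorem~\ref{thm:partition-lower-bound}. Its ground set consists of elements $e$, each with an independent nonnegative value $X_e$. The feasible sets are the common independent sets of $q$ partition matroids $M_1,\dots,M_q$, each with capacity one per block. We build a single-minded instance $I_d$ as follows. For every matroid index $j\in[q]$ and every block $B$ of $M_j$ we introduce one item $(j,B)$, in one copy. For every element $e$ we introduce a buyer whose bundle is $S_e=\{(j,B_j(e)) : j\in[q]\}$, where $B_j(e)$ is the block of $M_j$ containing $e$. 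The buyer's value is $X_e$, and buyers arrive in the same adversarial order as the elements. Each bundle has exactly $q=d$ items, so it has size at most $d$.

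The main step is to check that the two instances have the same feasible families. A set $A$ of elements is independent in every $M_j$ exactly when no block $B$ of any $M_j$ contains two elements of $A$. That is the same as saying no item $(j,B)$ lies in two bundles $S_e, S_{e'}$ with $e\ne e'\in A$, which is exactly feasibility of allocating bundles to the buyers in $A$ with one copy per item. So the map $e\mapsto$ buyer $e$ is a bijection between feasible sets. It preserves values, their joint distribution (independence is kept), and the arrival order.

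It then remains to see that $\operatorname{Gap}$ is unchanged. The offline optimum is a maximum over the same feasible family with the same realized values, so its expectation is identical. For online algorithms, any (possibly randomized) online policy for one instance becomes an online policy for the other with the same acceptance decisions at every history. This works because the information revealed at each step, namely the current value and the past decisions, is the same, and feasibility of accepting the current element is decided by the same family. So the best achievable expected online values coincide, and $\operatorname{Gap}(I_d)=\operatorname{Gap}(I_q)\ge d/(4\log(1+(e-1)d))$.

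There is no real obstacle; the only care needed is in stating precisely that online feasibility checks correspond. In particular, an algorithm may only accept an element whose addition keeps the selected set feasible, and this is preserved by the bijection.
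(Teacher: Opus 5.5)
\paragraph{Main gap: the argument is circular.}
Your translation from partition-matroid intersections to single-minded instances is itself correct: one item $(j,B)$ per block, bundles of size exactly $q$, the same feasible families, and the same offline and online values. As a proof of this theorem, however, it is circular. In the paper, Theorem~\ref{thm:partition-lower-bound} is not proved independently. It is \emph{derived from} Theorem~\ref{thm:single-minded-lower-bound} in Section~\ref{subsec:partition-matroid-intersection}, using the reverse translation with blocks $B_{j,x}$ indexed by the items of the pools. The text before the statement calls the two theorems equivalent formulations precisely because each follows from the other by such a correspondence, and an equivalence cannot establish either one. Your proposal therefore gives no argument that a hard instance exists, and that existence is the whole content of the statement.

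\paragraph{What is missing.}
You need an explicit construction together with two bounds. The translation is only the easy closing remark, and the construction can be carried out in either formulation. The paper's version is as follows.
\begin{itemize}
\item \emph{Construction.} Take disjoint pools $M_\ell$ with $|M_\ell|=r_\ell=8^\ell$. A layer-$\ell$ buyer demands one item from each of $M_\ell,\dots,M_d$, with $\prod_{j\le\ell}r_j$ buyer copies per bundle. It has value $1/r_\ell$ with probability $p=1/\prod_{j}r_j$, and zero otherwise. Layers arrive in the order $d,d-1,\dots,1$, so small-value buyers come first and their acceptances block later high-value buyers.
\item \emph{Online bound.} Let $Y_\ell$ be the reward from layer $\ell$ and $Q_\ell=\sum_{j>\ell}Y_j$. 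Counting the bundles that avoid occupied items gives $\E[Y_\ell\mid\mathcal H_\ell]\le\prod_{j>\ell}(1-Y_j)\le e^{-Q_\ell}$. Combined with $e^{y}\le 1+(e-1)y$ on $[0,1]$, this yields the potential inequality $\E[e^{Q_{\ell-1}}\mid\mathcal H_\ell]\le e^{Q_\ell}+e-1$. Iterating over the $d$ layers and applying Jensen gives $\ON(I_d)\le\log(1+(e-1)d)$.
\item \emph{Offline bound.} Use a greedy allocation that processes layers in the order $1,\dots,d$. Because the pools grow geometrically, at least $3r_\ell/4$ anchors of each layer are free and at least a $3/4$ fraction of the buyers at each anchor remain compatible. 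Each layer then contributes a constant in expectation, so $\E[\OPT]\ge d/4$.
\end{itemize}
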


For $d$-single-minded buyers, \citet*{CCFPW22} give a
$(d+1)$-competitive mechanism based on static anonymous item prices.
Our lower bound shows that this dependence on $d$ is optimal within a
logarithmic factor, even for unrestricted online algorithms. An
$\Omega(d/\log d)$ lower bound was previously known in the broader
combinatorial assignment model, with random valuations over multiple
bundles~\citep{MSV}; our construction establishes it with fixed
single-minded demands and independent scalar values.

Our construction builds on the \emph{big-decisions-first} framework
of \citet*{RS}. We present the analysis in the single-minded formulation,
where these ideas are most naturally expressed.

\subsection{Other related work}

\paragraph{Single-minded allocation and pricing.}
For fixed item multiplicity $B$, \citet*{FPT} obtain an
$O(d^{1/B})$-competitive static anonymous bundle-pricing mechanism and
an $\Omega((d/\log d)^{1/(B+1)})$ lower bound using qualitatively
independent partitions. \citet*{CMT19} study bundle pricing for interval
and tree-path demands. 

\paragraph{General feasibility constraints.}
The incompatible-group construction originating in \citet*{BIK07}
and used for prophet inequalities by \citet*{CHMS10} gives an
$\Omega(\log n/\log\log n)$ lower bound for downward-closed constraints
on $n$ elements. \citet*{Rub16} obtains a nearly tight $O(\log n)$ guarantee for
binary values, whereas \citet*{RS} establish a
nearly tight $\widetilde{\Omega}(\log^2 n)$ lower bound using multiple positive value
scales (here the tilde hides factors polynomial in $\log\log n$).

\subsection{Technical overview}

Our construction follows the \emph{big-decisions-first} principle of
\citet*{RS}: the offline allocation benefits from prioritizing buyers
with larger positive values, while the online algorithm encounters
buyers with smaller positive values first. We implement this conflict
using $d$ layers with overlapping demand bundles, so that accepting
early buyers consumes items needed by later buyers with larger values. Take disjoint item pools $M_\ell$ of
sizes $r_\ell=8^\ell$. A layer-$\ell$ buyer demands one item from each of
$M_\ell,\ldots,M_d$ and has value $1/r_\ell$ when active, and zero
otherwise. All activations are independent, with a common probability;
buyer multiplicities make the expected total value of each layer one.
Layers arrive in the order $d,d-1,\ldots,1$, so accepting smaller-value
buyers can obstruct later buyers of larger value. Each buyer has a
two-point distribution; its positive value depends on the layer.

The offline allocation processes layers in the opposite order,
$1,2,\ldots,d$. Since pool sizes grow geometrically, selections from
higher-value layers occupy only a small fraction of the items in each
subsequent pool. A greedy allocation obtains constant expected value
from every layer, yielding $\mathbb E[\mathrm{OPT}]\ge d/4$.

For any online algorithm, let $Y_\ell\in[0,1]$ be its reward from
layer $\ell$ and $Q_\ell=\sum_{j>\ell}Y_j$ its preceding reward.
Earning $Y_j$ occupies at least a $Y_j$ fraction of $M_j$.
Conditional on the history $\mathcal H_\ell$ before layer $\ell$,
counting the bundles that avoid occupied items gives
\[
  \mathbb E[Y_\ell\mid\mathcal H_\ell]
  \le \prod_{j>\ell}(1-Y_j)\le e^{-Q_\ell}.
\]
We then use an exponential-potential argument similar in flavor to the one used in the proof of Lemma~1 of \cite{MSV}. More specifically, we prove
\(
  \mathbb E[e^{Q_{\ell-1}}\mid\mathcal H_\ell]
  \le e^{Q_\ell}+e-1.
\)
Iterating this inequality and applying Jensen’s inequality gives 
$\mathbb E[\mathrm{ALG}]\le\log(1+(e-1)d)$, completing the separation.

\section{Model, preliminaries, and notation}
\label{sec:model}

We consider instances of $d$-single-minded prophet inequalities. More specifically, an instance
$I=(M,B,(S_b,\mathcal D_b)_{b\in B},\pi)$ consists of a finite set
of unit-capacity items $M$, a finite set of buyers $B$, a nonempty
demand bundle $S_b\subseteq M$ with $|S_b|\leq d$ and a value distribution $\mathcal D_b$
for each buyer, and a fixed arrival order $\pi$.
Values $V_b\sim D_b$ are independent. Each buyer $b$ is single-minded: for every $T\subseteq M$, its valuation is
\[
  \nu_b(T)=
  \begin{cases}
    V_b, & S_b\subseteq T,\\
    0,   & \text{otherwise}.
  \end{cases}
\]
The family of feasible buyer sets is
\(
  \mathcal F
  =
  \{A\subseteq B:
    S_b\cap S_{b'}=\varnothing
    \text{ for all distinct }b,b'\in A\}.
\)

An online algorithm $\mathcal{A}$ knows $I$ in advance. Upon buyer $b$'s arrival,
it observes $V_b$ and irrevocably accepts or rejects $b$, maintaining
feasibility. Its reward is
$\operatorname{ALG}_{\mathcal A}
=\sum_{b\in A_{\mathcal A}}V_b$, where $A_{\mathcal A}$ is its accepted set. For a value realization $\mathbf v=(v_b)_{b\in B}$, define
\[
  \operatorname{OPT}(\mathbf v)
  =\max_{A\in\mathcal F}\sum_{b\in A}v_b,
  \qquad
  \operatorname{ON}(I)
  =\sup_{\mathcal A}\mathbb E[\operatorname{ALG}_{\mathcal A}].
\]
Let $\mathbf V=(V_b)_{b\in\mathcal B}$, write $\mathrm{OPT}=\mathrm{OPT}(\mathbf V)$, and define
\[
  \operatorname{Gap}(I)
  =\frac{\mathbb E[\operatorname{OPT}]}{\operatorname{ON}(I)}.
\]
An algorithm is \emph{$\rho$-competitive}, for $\rho\ge1$, if
$\mathbb E[\operatorname{ALG}_{\mathcal A}]
\ge \mathbb E[\operatorname{OPT}]/\rho$ for every instance. Finally, an item is \emph{free} if no selected bundle contains it, and
\emph{occupied} otherwise. We write $[d]=\{1,\ldots,d\}$,
use natural logarithms, and assign empty sums and products the
values zero and one, respectively.

\section{Proofs of the main results}
\label{sec:main}
Fix an integer $d\ge2$. We define $r_\ell=8^\ell$ for $\ell \in [d]$, $P=\prod_{j=1}^d r_j=8^{d(d+1)/2}$ and $p=1/P$. We construct $I_d$ below and then prove that its online value is at most
$\log(1+(e-1)d)$, whereas its expected offline value is at least $d/4$.

\subsection{Construction}
\label{subsec:construction}

Take pairwise disjoint item pools $M_1,\ldots,M_d$ with $|M_\ell|=r_\ell$, thus the item set is $M=M_1\cup\ldots \cup M_d$.
For each $\ell\in[d]$ and every tuple
$(x_\ell,\ldots,x_d)\in M_\ell\times\cdots\times M_d$, introduce
\(
  m_\ell=\prod_{j=1}^{\ell}r_j
\)
distinct buyers demanding the bundle $\{x_\ell,\ldots,x_d\}$.
These buyers constitute \emph{layer $\ell$}, and $x_\ell$ is their
\emph{anchor}. Each layer-$\ell$ buyer $b$ independently has value
\[
  V_b=
  \begin{cases}
    1/r_\ell, & \text{with probability }p,\\
    0,        & \text{with probability }1-p.
  \end{cases}
\]
A buyer is \emph{active} when its value is positive. Layers arrive in
the order $d,d-1,\ldots,1$, with an arbitrary fixed order within each
layer. Every bundle has size at most $d$. At each anchor in $M_\ell$, the number
of buyers is
\(
  m_\ell\prod_{j>\ell}r_j=P,
\)
so the total number of buyers is $P\sum_{\ell=1}^d r_\ell$.
Since at most one buyer can be selected per anchor, a feasible allocation
earns at most $r_\ell/r_\ell=1$ from layer $\ell$, and at most $d$ in
total.

We will use the following standard counting identity. Fix a free anchor in
$M_\ell$, and let $U_j\subseteq M_j$ be the occupied items in each deeper
pool $j>\ell$. The number of buyers at this anchor whose bundles are
entirely free is exactly
\begin{equation}\label{eq:count}
  m_\ell\prod_{j>\ell}(r_j-|U_j|)
  =P\prod_{j>\ell}\left(1-\frac{|U_j|}{r_j}\right).
\end{equation}
Indeed, each deeper coordinate has $r_j-|U_j|$ available choices, and
every resulting bundle has $m_\ell$ buyer copies.

\subsection{Online value}
\label{subsec:online}

\begin{lemma}\label{lem:online}
The instance $I_d$ satisfies
\(
  \ON(I_d)\le\log(1+(e-1)d).
\)
\end{lemma}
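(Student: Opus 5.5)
Fix an arbitrary, possibly randomized, online algorithm. For each layer $\ell$ let $Y_\ell$ be its reward from layer $\ell$, so that $\ALG=\sum_\ell Y_\ell$ and $Y_\ell\in[0,1]$ (at most one buyer per anchor, each worth $1/r_\ell$). Let $Q_\ell=\sum_{j>\ell}Y_j$, and let $\mathcal H_\ell$ be the history (all realizations and random choices) before layer $\ell$ arrives.

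\textbf{Step 1: a one-layer bound.} Earning $Y_j$ means accepting $r_jY_j$ layer-$j$ buyers. These have distinct anchors in $M_j$, so $|U_j|\ge r_jY_j$ whenever layer $\ell<j$ begins; later layers can only occupy more. Every buyer that layer $\ell$ accepts has an entirely free bundle at the moment it arrives, so it is in particular free with respect to the occupied sets $U_j$ fixed at the start of layer $\ell$. Hence $r_\ell Y_\ell$ is at most the number of anchors in $M_\ell$ that carry at least one active buyer whose bundle avoids $\bigcup_{j>\ell}U_j$. By the counting identity \eqref{eq:count}, each anchor has $P\prod_{j>\ell}(1-|U_j|/r_j)$ such buyers, each active independently with probability $p=1/P$. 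A union bound then gives probability at most $\prod_{j>\ell}(1-Y_j)$ per anchor. Conditioning on $\mathcal H_\ell$, which is independent of layer-$\ell$ values, yields
\[
\E[Y_\ell\mid\mathcal H_\ell]\le\prod_{j>\ell}(1-Y_j)\le e^{-Q_\ell}.
\]

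\textbf{Step 2: the exponential potential (the main obstacle).} We want $\E[e^{Q_{\ell-1}}\mid\mathcal H_\ell]\le e^{Q_\ell}+e-1$. Since $Q_{\ell-1}=Q_\ell+Y_\ell$ with $Y_\ell\in[0,1]$, convexity gives $e^{Y}\le 1+(e-1)Y$ on $[0,1]$. Therefore
\[
\E[e^{Q_{\ell-1}}\mid\mathcal H_\ell]\le e^{Q_\ell}\bigl(1+(e-1)\E[Y_\ell\mid\mathcal H_\ell]\bigr)\le e^{Q_\ell}+(e-1).
\]
This uses Step 1 to cancel $e^{Q_\ell}$ against $e^{-Q_\ell}$. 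The delicate point is Step 1 itself: the algorithm may adaptively reject buyers within a layer. The bound survives because it is stated against the free-bundle set at the start of the layer, and activations within a layer are independent of $\mathcal H_\ell$.

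\textbf{Step 3: iterate and apply Jensen.} Note $Q_d=0$. Applying the tower property from $\ell=d$ down to $\ell=1$ gives $\E[e^{\ALG}]=\E[e^{Q_0}]\le 1+(e-1)d$. Jensen's inequality then gives $e^{\E[\ALG]}\le\E[e^{\ALG}]$, so $\E[\ALG]\le\log(1+(e-1)d)$. Since the algorithm was arbitrary, taking the supremum yields $\ON(I_d)\le\log(1+(e-1)d)$. Randomized algorithms are handled by including their internal randomness in $\mathcal H_\ell$, or equivalently by averaging over deterministic algorithms.
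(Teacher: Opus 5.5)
Your proposal is correct and follows the paper's proof essentially step by step: the same one-layer bound $\E[Y_\ell\mid\mathcal H_\ell]\le\prod_{j>\ell}(1-Y_j)\le e^{-Q_\ell}$ against the bundles free at the start of the layer, the same convexity step $e^{y}\le 1+(e-1)y$, and the same iteration plus Jensen. The only cosmetic difference is that you count anchors carrying an active initially-free buyer and apply a union bound, whereas the paper sums the expected values of all initially-free buyers directly; the two computations give the identical bound.
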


\begin{proof}
Fix an arbitrary randomized online algorithm $\mathcal A$.
Let $Y_\ell\in[0,1]$ be its reward from layer $\ell$, and write
$Q_\ell=\sum_{j>\ell}Y_j$ for its reward before that layer arrives.
Let $\mathcal H_\ell$ be the sigma-algebra generated by the algorithm's
independent random seed and all buyer activations in layers $j>\ell$.

At the start of layer $\ell$, every anchor in $M_\ell$ is free, since
previously arriving buyers request items only from deeper pools.
For $j>\ell$, let $U_j$ be the occupied items in $M_j$ at this time.
The positive-value acceptances from layer $j$ occupy $r_jY_j$ distinct
anchors in $M_j$, and hence
\(
  {|U_j|}/{r_j}\ge Y_j.
\)

Every buyer accepted during layer $\ell$ must have been feasible at the
start of the layer. Conditional on $\mathcal H_\ell$, the initially
feasible buyer set is fixed, and all layer-$\ell$ activations remain
independent with probability $p$. By~\eqref{eq:count}, there are
$r_\ell P\prod_{j>\ell}(1-|U_j|/r_j)$ such buyers, each with expected
value $p/r_\ell$. Since $pP=1$, it follows that
\begin{equation}\label{eq:conditional-online}
  \E[Y_\ell\mid\mathcal H_\ell]
  \le\prod_{j>\ell}\left(1-\frac{|U_j|}{r_j}\right)
  \le\prod_{j>\ell}(1-Y_j)
  \le e^{-Q_\ell}.
\end{equation}
For $0\le y\le1$, convexity gives $e^y\le1+(e-1)y$. As $Q_\ell$ is
$\mathcal H_\ell$-measurable, \eqref{eq:conditional-online} yields
\begin{align*}
 \E[e^{Q_{\ell-1}}\mid\mathcal H_\ell] = \E[e^{Q_\ell+Y_\ell}\mid\mathcal H_\ell]
  &\le e^{Q_\ell}
    \bigl(1+(e-1)\E[Y_\ell\mid\mathcal H_\ell]\bigr)\\
  &\le e^{Q_\ell}+e-1.
\end{align*}
Taking expectations and iterating over the $d$ arriving layers, starting
from zero reward, gives
\(
  \E[e^{\ALG_{\mathcal A}}]\le1+(e-1)d.
\)
Finally, Jensen's inequality gives
\[
  \E[\ALG_{\mathcal A}]
  \le\log\E[e^{\ALG_{\mathcal A}}]
  \le\log(1+(e-1)d).
\]
The bound holds for every $\mathcal A$, thus proving the lemma.
\end{proof}

\subsection{Value of the optimal solution}
\label{subsec:offline}

\begin{lemma}\label{lem:offline}
The instance $I_d$ satisfies $\E[\OPT]\ge d/4$.
\end{lemma}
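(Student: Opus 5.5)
We bound $\E[\OPT]$ from below by analysing one explicit offline allocation: a greedy that handles the layers in the order $1,2,\ldots,d$. When it reaches layer $\ell$, it scans the anchors $x\in M_\ell$ one at a time. At each anchor it picks any active buyer whose bundle $\{x,x_{\ell+1},\ldots,x_d\}$ is entirely free, if one exists. Feasibility is automatic, and $\OPT$ is at least the value this greedy collects. It therefore suffices to show that layer $\ell$ contributes at least $1/4$ in expectation. Since every layer-$\ell$ selection is worth $1/r_\ell$, this means showing that each anchor of $M_\ell$ is used with probability at least $1/4$.

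The first step is to control the occupied items when layer $\ell$ is processed. Only layers $j<\ell$ have been handled, and each layer-$j$ selection occupies exactly one item of every pool $M_k$ with $k\ge j$. Since at most one buyer is selected per anchor, layer $j$ occupies at most $r_j$ items of each deeper pool. Hence for every $k\ge\ell$ the number of occupied items $|U_k|$ in $M_k$ is at most $\sum_{j<\ell}r_j<r_{\ell-1}\cdot 8/7$, and this holds deterministically. The resulting fraction is
\[
\frac{|U_k|}{r_k}\le\frac{8}{7}\cdot 8^{\ell-1-k}\le \frac{1}{7}.
\]
In particular, at least a $6/7$ fraction of the anchors in $M_\ell$ are free before layer $\ell$ begins. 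The same estimate bounds the deeper pools $k>\ell$, where the fractions decay geometrically as $\frac87 8^{\ell-1-k}$.

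The main obstacle is that the selections made within layer $\ell$ also consume items of the deeper pools, and these deeper pools are shared across all anchors of layer $\ell$. Even so, layer $\ell$ adds at most $r_\ell$ occupied items to $M_k$ for $k>\ell$, a fraction at most $8^{\ell-k}$. So throughout the processing of layer $\ell$, for all $k>\ell$,
\[
\frac{|U_k|}{r_k}\le\Bigl(\frac{8}{7}\cdot\frac18+1\Bigr)8^{\ell-k}=\frac{8}{7}\,8^{\ell-k}.
\]
Now fix a free anchor. At the moment it is scanned, the set of occupied items depends only on activations of other buyers, namely earlier layers and other anchors. Its own $P$ buyers are therefore independent of that state. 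By \eqref{eq:count}, the number of its buyers with entirely free bundles is at least
\[
P\prod_{k>\ell}\Bigl(1-\tfrac87 8^{\ell-k}\Bigr)\ge P\Bigl(1-\tfrac87\sum_{i\ge1}8^{-i}\Bigr)=P\cdot\tfrac{41}{49},
\]
using $\prod(1-a_i)\ge 1-\sum a_i$. Each such buyer is active independently with probability $p=1/P$, so the probability that at least one is active is at least
\[
1-(1-1/P)^{41P/49}\ge 1-e^{-41/49}.
\]

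Combining the pieces, the expected number of anchors used in layer $\ell$ is at least
\[
\frac67\, r_\ell\,\bigl(1-e^{-41/49}\bigr)\ge 0.36\, r_\ell\ge r_\ell/4.
\]
The expected contribution of layer $\ell$ is therefore at least $1/4$, and summing over the $d$ layers gives $\E[\OPT]\ge d/4$. The one point to handle with care is the independence claim: an anchor's own activations are not revealed until that anchor is scanned, so they are independent of the occupancy state at that moment. This justifies conditioning on that state.
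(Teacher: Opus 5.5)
Your proof is correct and is essentially the paper's argument: the same greedy over layers $1,\ldots,d$, the same bound of at most $\sum_{i\le\ell}r_i\le\frac87 r_\ell$ occupied items in each deeper pool, which gives the availability factor $1-\frac{8}{49}$ via \eqref{eq:count}, and the same conditioning on the exposed history, which gives success probability at least $1-e^{-pF}$ at each free anchor. The only differences are that you keep the sharper constants $6/7$ and $41/49$ where the paper rounds both down to $3/4$, and that you use implicitly (the paper states it explicitly) that an anchor free at the start of layer $\ell$ stays free until it is scanned, because layer-$\ell$ bundles meet $M_\ell$ only in their anchor.
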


\begin{proof}
Consider the following allocation procedure. Process layers in the order
$1,2,\ldots,d$, and process the anchors of each layer in a fixed order.
At an anchor, expose the values of all buyers anchored there. If the
anchor is free and some active buyer has a completely free bundle,
select the first such buyer according to a fixed tie-breaking order;
otherwise, select no buyer. The procedure maintains feasibility, so its
total reward $G$ satisfies $G\le\OPT$ for every realization.

Before layer $\ell$ is processed, there have been at most
$\sum_{i<\ell}r_i$ selections. Each uses one item of $M_\ell$. Since
\[
  \frac{\sum_{i<\ell}r_i}{r_\ell}
  =\sum_{h=1}^{\ell-1}8^{-h}
  \le\frac17\le\frac14,
\]
at least $3r_\ell/4$ anchors are initially free. Each remains free
until its turn, because a layer-$\ell$ bundle uses no other anchor in
$M_\ell$.

Immediately before processing any free anchor in layer $\ell$, let $K$
be the number of previous selections, including those earlier in the
same layer. Every previous selection uses exactly one item of each
deeper pool $M_j$, $j>\ell$. Since selected bundles are disjoint,
exactly $K$ items are occupied in each such pool. Moreover,
\[
  K\le\sum_{i\le\ell}r_i\le\frac87r_\ell,
  \qquad
  \sum_{j>\ell}\frac1{r_j}\le\frac1{7r_\ell}.
\]
For $j>\ell$, we also have $K/r_j\le1/7<1$. Applying
$\prod_i(1-u_i)\ge1-\sum_i u_i$ for $u_i\in[0,1]$ gives
\begin{equation}\label{eq:availability}
  \prod_{j>\ell}\left(1-\frac K{r_j}\right)
  \ge1-K\sum_{j>\ell}\frac1{r_j}
  \ge1-\frac8{49}
  \ge\frac34.
\end{equation}
This also holds for $\ell=d$, when the product is empty.
By~\eqref{eq:count}, at least $3P/4$ buyer labels at the current anchor
have compatible bundles.

Condition on all activations exposed before the current anchor is
processed. The compatible buyer set is determined by this history.
Different anchors have disjoint buyer labels, so the activations of
buyers at the current anchor remain independent Bernoulli variables
with parameter $p$. If $F$ is the number of compatible labels, the
conditional probability of making a selection is therefore
\begin{equation}\label{eq:success}
  1-(1-p)^F
  \ge1-e^{-pF}
  \ge1-e^{-3/4}.
\end{equation}
Conditional on the history at the start of layer $\ell$, its initially
free anchors form a fixed set of size at least $3r_\ell/4$, and each
remains free until its turn. Applying~\eqref{eq:success} at each such
anchor the expected reward from the layer
is at least
\[
  \frac{3r_\ell}{4}\cdot\frac1{r_\ell}
  \bigl(1-e^{-3/4}\bigr)
  =\frac34\bigl(1-e^{-3/4}\bigr).
\]
Summing over all layers and using $e^{3/4}\ge1+3/4=7/4$, we obtain
\[
  \E[\OPT]\ge\E[G]
  \ge\frac{3d}{4}\bigl(1-e^{-3/4}\bigr)
  \ge\frac{3d}{4}\left(1-\frac47\right)
  =\frac{9d}{28}>\frac d4.
\]
\end{proof}

\begin{proof}[Proof of Theorem~\ref{thm:single-minded-lower-bound}]
By Lemmas~\ref{lem:online} and~\ref{lem:offline},
\[
  \Gap(I_d)=\frac{\E[\OPT]}{\ON(I_d)}
  \ge\frac{d}{4\log(1+(e-1)d)},
\]
as required.
\end{proof}

\subsection{Partition-matroid intersection}
\label{subsec:partition-matroid-intersection}

\begin{proof}[Proof of Theorem~\ref{thm:partition-lower-bound}]
Fix an integer $q\ge 2$, and let $J$ be the single-minded instance supplied
by Theorem~\ref{thm:single-minded-lower-bound} with $d=q$. Let $\mathcal B$ be its
buyer set and $M_1,\ldots,M_q$ the partition of its items guaranteed by that
theorem. Thus, $|S_b\cap M_j|\le 1$ for every buyer $b\in\mathcal B$ and
every $j\in[q]$. We use the standard correspondence with partition
matroids~\citep[Footnote~4]{SVW23}, spelling out the construction below.

For each $j\in[q]$ and each item $x\in M_j$, define
\[
    B_{j,x}:=\{b\in\mathcal B:x\in S_b\}.
\]
Since each demand bundle contains at most one item from $M_j$, the
nonempty sets $B_{j,x}$ are pairwise disjoint. Complete them to a partition
of $\mathcal B$ by adding a singleton block $\{b\}$ for every buyer whose
bundle does not intersect $M_j$. Let $\mathcal N_j$ be the partition
matroid on $\mathcal B$ defined by these blocks, each with capacity one.
Distinct buyers remain distinct ground-set elements even when their demand
bundles coincide.

A set $A\subseteq\mathcal B$ is independent in every $\mathcal N_j$ if
and only if $|A\cap B_{j,x}|\le 1$ for every $j\in[q]$ and $x\in M_j$;
the singleton blocks impose no additional restriction. Since the pools
partition the item set, these inequalities say precisely that no item is
requested by two distinct buyers in $A$. Thus, the common independent sets
of $\mathcal N_1,\ldots,\mathcal N_q$ are exactly the feasible buyer sets
of $J$.

Define a prophet inequality instance $I_q$ on this common ground set by
assigning element $b$ the same random value $V_b$ as in $J$ and using the
same arrival order. This instance is finite, and its values remain
independent. For every value realization, the two instances have identical
feasible sets and objective values, so their offline optima coincide.
Moreover, an online algorithm for either instance can be simulated on the
other by making the same accept-or-reject decisions after the same observed
values. This preserves feasibility and realized reward, also for randomized
algorithms, and hence $\operatorname{ON}(I_q)=\operatorname{ON}(J)$.
Consequently, Theorem~\ref{thm:single-minded-lower-bound} gives
\[
    \operatorname{Gap}(I_q)
    =\operatorname{Gap}(J)
    \ge \frac{q}{4\log\!\bigl(1+(e-1)q\bigr)},
\]
which proves the theorem.
\end{proof}


\section{Conclusion}
\label{sec:conclusion}

We establish an $\Omega(q/\log q)$ lower bound for prophet inequalities
under intersections of $q$ partition matroids. In the single-minded formulation, our construction yields
an $\Omega(d/\log d)$ lower bound for buyers requesting fixed bundles of
at most $d$ items, with one copy of each item. These results show that
the known linear upper bounds are optimal up to a logarithmic factor.

Closing the remaining logarithmic gap is open. Our construction uses
independent two-point distributions whose positive values vary across
buyers. Determining the optimal dependence on $q$ for i.i.d.\ Bernoulli
values, as highlighted by \citet*{SVW23}, remains an open question.

\section*{AI Usage}
The construction was found during extended discussions with ChatGPT 6 Astra. More specifically, we guided the model toward adapting the construction of \cite{RS} to a construction whose feasibility constraints are expressed through items and demand bundles of single-minded buyers. We supplied a blueprint for how we expected the lower-bound instance to be and continued to refine the final construction through successive exchanges with the model. We also used ChatGPT 6 Astra for assistance with the presentation. We have checked the mathematical arguments and take full responsibility for all the content.

\printbibliography

\end{document}